\newcommand{\suc}[1]{\ensuremath{\text{succ}({#1})}}
\newcommand{\sqrtsig}{\ensuremath{\sqrt{\sigma}}}
\newcommand{\sqrtsiglg}{\ensuremath{\sqrtsig\lg(n/\sqrtsig)}}
\newcommand{\sqrtsiglga}{\ensuremath{\sqrt{\sigma_1}\lg(n_1/\sqrt{\sigma_1})}}
\newcommand{\sqrtsiglgb}{\ensuremath{\sqrt{\sigma_2}\lg(n_2/\sqrt{\sigma_2})}}
\newcommand{\sqrtsiglgc}{\ensuremath{\sqrt{\sigma_3}\lg(n_3/\sqrt{\sigma_3})}}
\begin{document}

\title{The Minimum Bends in a Polyline Drawing with Fixed Vertex Locations}
\author{Taylor Gordon}
\institute{Google}
\maketitle

\begin{abstract}
We consider embeddings of planar graphs in $R^2$ where vertices map to points
and edges map to polylines. We refer to such an embedding as a polyline drawing,
and ask how few bends are required to form such a drawing for an arbitrary
planar graph. It has long been known that even when the vertex locations are
completely fixed, a planar graph admits a polyline drawing where edges bend a
total of $O(n^2)$ times. Our results show that this number of bends is
optimal. In particular, we show that $\Omega(n^2)$ total bends is required to
form a polyline drawing on any set of fixed vertex locations for almost all
planar graphs. This result generalizes all previously known lower bounds, which
only applied to convex point sets, and settles 2 open problems.
\end{abstract}

\section{Introduction}

A \emph{polyline drawing} is an embedding of a planar graph in $R^2$ where
vertices map to points and edges map to polylines (piecewise linear curves). An
example of such a drawing is depicted in Figure~\ref{fig:polyline}.

\begin{figure}
\centering
\includegraphics[trim=0 0.5cm 0 0.5cm]{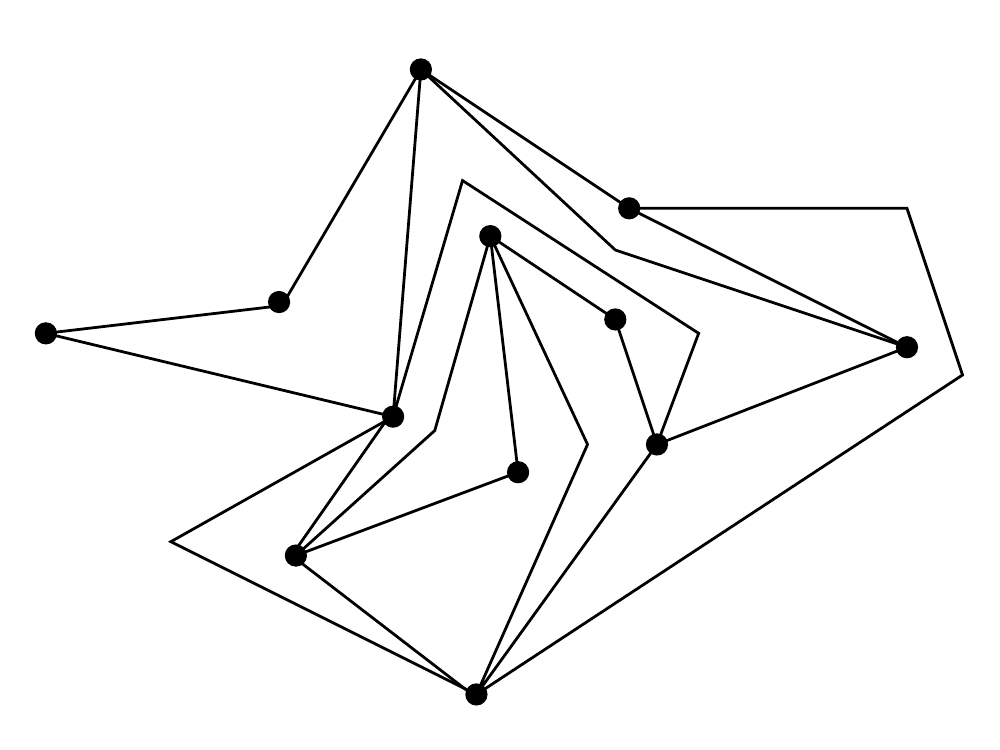}
\caption{A polyline drawing of a planar graph on 12 vertices.}
\label{fig:polyline}
\end{figure}

Applications of polyline drawings to \emph{VLSI circuit design} and
\emph{information visualization} (see \cite{battista98}, \cite{tamassia87}) have
inspired the development of algorithms for constructing polyline drawings of
arbitrary planar graphs. A particularly auspicious construction was shown in
\cite{schnyder90}, where all edges map to straight-line segments and all
vertices map into an $(n-2) \times (n-2)$ grid. 

What if we further restrict the locations to which vertices map? Do all planar
graphs admit polyline drawings if we fix the points to which vertices are
mapped? The answer is yes but with a commensurate increase in the complexity of
the polylines to which edges map. To what extent can this complexity be
minimized? One prevailing way to measure this complexity is by counting the
number of times the polylines \emph{bend} (see \cite{erten04},
\cite{kaufmann02}, \cite{tamassia87}, \cite{pach98}).

When the vertex mapping is fixed, Pach and Wenger \cite{pach98} showed that all
planar graphs on $n$ vertices admit a polyline drawing with at most $O(n^2)$
bends. In the same paper, they showed that this result is optimal if the vertex
mapping is to points in convex position. It was left as an open problem whether
this lower bound applies to all fixed vertex mappings. We answer this open
problem in Theorem~\ref{thm:lower_bound} by extending the lower bound to all
fixed vertex mappings. That is, for any fixed vertex mapping, we show that
almost all planar graphs do not admit polyline drawings with $o(n^2)$ bends.

To prove our lower bound, we generalize the encoding techniques from
\cite{gordon12}, which were also limited to convex point sets. In particular, we
show how to encode a planar graph more efficiently when it admits a polyline
drawing with fewer bends (assuming a fixed vertex mapping). Very recently,
Francke and T{\'o}th \cite{francke14} considered this same encoding problem.
Their results gave an encoding bound of $O(\beta + n)$ bits when a polyline
drawing with $\beta$ bends is admitted. Our results significantly improve this
encoding bound to $n\lg(\beta/n) + O(n)$ bits.

Our encoding technique is described in Section~\ref{sec:encoding}. The approach
exploits a relation between a graph's separability and the numbers of bends with
which it can be drawn. This separability is formalized by
Lemma~\ref{lem:poly_separator} in Section~\ref{sec:separators}. The encoding
technique also requires an efficient way to encode the path traced by a
polyline, which is described in Section~\ref{sec:cld}. Finally, the lower bound
(Theorem~\ref{thm:lower_bound}) is presented in Section~\ref{sec:lower_bound}.

\section{Convex Layer Diagrams}
\label{sec:cld}

A \emph{convex layer diagram} is an embedding of $l$ convex polygons in
the plane such that the $i$-th convex polygon is strictly contained by the
$(i+1)$-th polygon. By convention, we refer to the outermost polygon as
layer-$1$ and the innermost polygon as layer-$l$. As with any simple polygon,
each layer defines a \emph{boundary} consisting of \emph{layer~edges} and
\emph{layer~vertices}. See Figure~\ref{fig:cld} for an example diagram.

\begin{figure}
\centering
\includegraphics[trim=0 1cm 0 0.5cm]{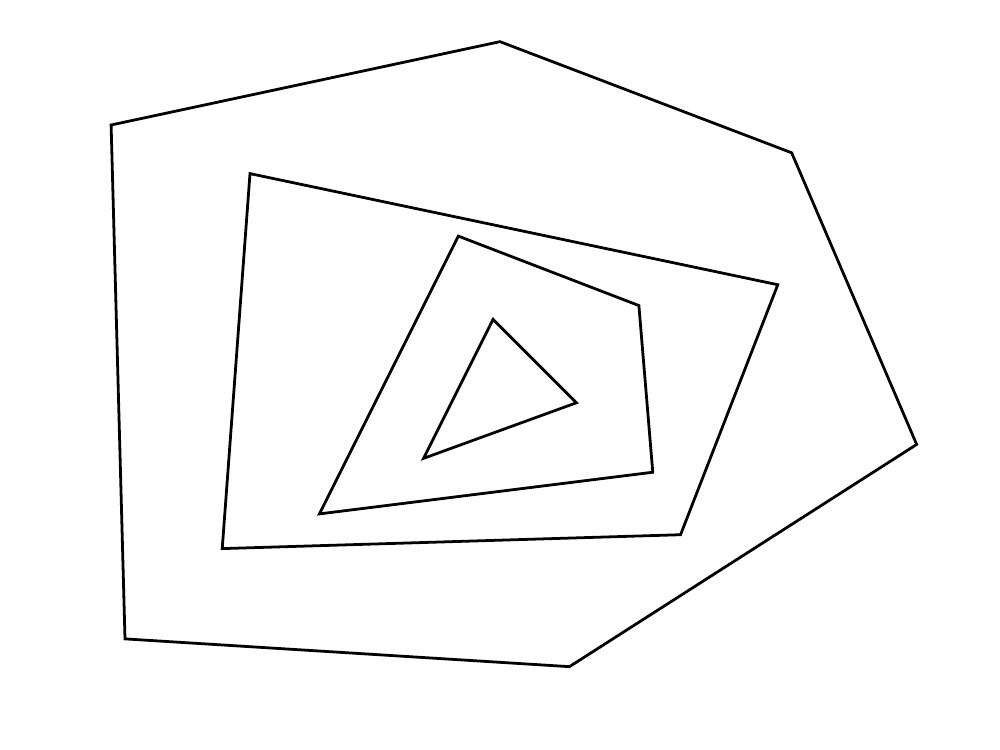}
\caption{A convex layer diagram with 4 layers.}
\label{fig:cld}
\end{figure}

Let $\Lambda$ be a fixed convex layer diagram drawn in $R^2$. A line segment is
said to \emph{cross} a layer edge if the two intersect at a point that is
interior to both the edge and the line segment. A line segment is said to cross
a layer boundary if it crosses one of its edges. The following lemma shows that
the layer edges in $\Lambda$ that a line segment crosses can be identified with
an ordered pair of \emph{support vertices}.

\begin{lemma}
\label{lem:support_vertices}
Suppose that a line segment crosses $k \geq 2$ layer boundaries in $\Lambda$.
Then, the layer edges crossed by this line segment can be identified by a pair
of support vertices $(v_i, v_j)$, each being incident to a crossed
layer edge.
\end{lemma}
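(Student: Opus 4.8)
The plan is to first reduce the segment to its supporting line and exploit the onion-like nesting of $\Lambda$ to understand the crossing pattern. Extend the given segment $s$ to the full line $\ell$ containing it. Because each layer is convex and the layers are strictly nested, $\ell$ meets the boundary of any layer it enters in exactly two points, and it reaches some maximal depth $d$ (the deepest layer whose interior it meets). Travelling along $\ell$ from outside the diagram, the crossed boundaries therefore occur in the order $1,2,\dots,d-1,d,d,d-1,\dots,2,1$, each crossed layer contributing exactly two crossed edges. Since $s$ is a contiguous sub-interval of $\ell$, the edges it crosses form a contiguous sub-run of this sequence; in particular there is a well-defined \emph{first} crossed edge $e$ and \emph{last} crossed edge $f$ along $s$, and every crossed edge lies between them. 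Because $k \ge 2$, we have $e \neq f$, so there are at least two distinct crossed edges available to host the support vertices. This structural observation is what lets a constant amount of information pin down the whole set.

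Next I would promote the two boundary edges $e$ and $f$ to two vertices by an extremal (tangency) argument. Consider the family of all line segments that cross exactly the same set of layer edges as $s$. Small perturbations of $s$ — translating or rotating $\ell$ and sliding the endpoints — do not change which edges are crossed, so this family is a relatively open region in the space of segments whose combinatorial type changes only when the segment sweeps across a layer vertex. Deforming $s$ within this cell until it is pinned, I obtain a representative segment whose supporting line passes through two layer vertices $v_i$ and $v_j$ while still crossing exactly the edges crossed by $s$. The two pinning vertices can be taken to be endpoints of the extreme crossed edges, so that $v_i$ is incident to $e$ and $v_j$ is incident to $f$, as the lemma requires.

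Finally, I would argue that the ordered pair $(v_i,v_j)$ identifies the crossed set. Note that the intermediate crossings are \emph{not} determined by $e$ and $f$ alone, since different segments crossing $e$ first and $f$ last may cross different edges of the layers in between; this is precisely why vertices rather than edges are needed. But given $\Lambda$ together with $(v_i,v_j)$, one recovers the supporting segment $\overline{v_i v_j}$, which determines its line exactly, and then reads off the layer edges it crosses; by construction these coincide with the edges crossed by $s$. I expect the main obstacle to be the extremal argument of the second paragraph: I must verify that the crossing cell of $s$ is genuinely pinned by \emph{two} vertices (not zero or one), and that those two vertices can always be chosen incident to the first and last crossed edges. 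This is exactly where the convexity and strict nesting of the layers, together with the hypothesis $k \ge 2$, are essential; the contiguity established in the first paragraph then does the rest, since a contiguous run is recovered from the line through its two pinning endpoints.
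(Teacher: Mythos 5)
Your overall strategy---extend the segment to its supporting line and deform it until it is pinned by two layer vertices incident to crossed edges---is the same tangency argument the paper uses; the paper just makes the deformation explicit (perturb to general position, translate left until the line first hits a vertex $v_i$ of $\Psi$, then rotate counterclockwise about $v_i$ until it hits a second vertex $v_j$), which is what guarantees that every crossed edge survives the deformation and that a second pinning vertex actually exists when $k \geq 2$. However, your final step has a genuine gap: you claim that from $(v_i, v_j)$ one ``recovers the supporting segment $\overline{v_i v_j}$ \ldots and then reads off the layer edges it crosses.'' The pinned line passes \emph{through} the layer vertices $v_i$ and $v_j$, so at each of them it crosses neither incident layer edge in the interior sense, and it is ambiguous which of the two edges incident to each such vertex belonged to the original crossed set $\Phi$. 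This ambiguity is the entire reason the lemma asks for an \emph{ordered} pair, and you name the pair as ordered but never use the order. The paper resolves it by exploiting the canonical construction: since the decoder knows $v_i$ was the rotation pivot, it rotates the recovered line back clockwise about $v_i$ by a sufficiently small angle (disambiguating at $v_j$) and then translates it right by a sufficiently small amount (disambiguating at $v_i$), recovering the original edge intersections unambiguously. Your proof needs some such canonical back-perturbation, and for it to work the deformation itself must be canonical rather than an arbitrary path to an arbitrary codimension-2 face of the cell.

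A secondary, smaller problem: you assert that the two pinning vertices ``can be taken to be endpoints of the extreme crossed edges,'' i.e., incident to the first crossed edge $e$ and the last crossed edge $f$. This is not justified and is not true in general---the deformation may pin on vertices of intermediate layers---but it is also not needed, since the lemma only requires each support vertex to be incident to \emph{some} crossed layer edge. What you do need, and gloss over, is that the walls of your configuration-space cell are exactly the events where the line sweeps past a vertex incident to a crossed edge (an edge's intersection with a continuously moving line is lost only at its endpoints), so that the pinning vertices automatically land in $\Psi$; stating this explicitly would replace the hand-waving about ``combinatorial type'' and make the incidence claim of the lemma immediate.
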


\begin{proof}
Let $\Phi$ be the layer edges crossed by the line segment and let $\Psi$ be the
layer vertices incident to the edges in $\Phi$. Extend the line segment to form
a coincident line $\gamma$. We will show how to transform $\gamma$ via
translations and rotations so that it intersects with two vertices in $\Psi$,
while preserving its edge intersections in $\Phi$.

Start by perturbing $\gamma$ by an sufficiently small rotation and translation
to ensure that each vertex in $\Psi$ is at a distinct distance from $\gamma$
along the $x$-axis and that $\gamma$ is aligned with neither the $x$-axis nor
any edge in $\Phi$.

We can then translate $\gamma$ left until it first intersects a vertex $v_i$ in
$\Psi$. Call this new line $\gamma'$. Clearly, all edges in $\Phi$ intersect
$\gamma'$. Rotate $\gamma'$ counterclockwise about $v_i$ until it intersects
another vertex $v_j \neq v_i$, and call this new line $\gamma''$. By the same
argument $\gamma''$ must intersect all edges in $\Phi$. See
Figure~\ref{fig:support_vertices} for an example of this process.

\begin{figure}
\centering
\subfloat[][]{\includegraphics[trim=0 0.8cm 0 1cm]{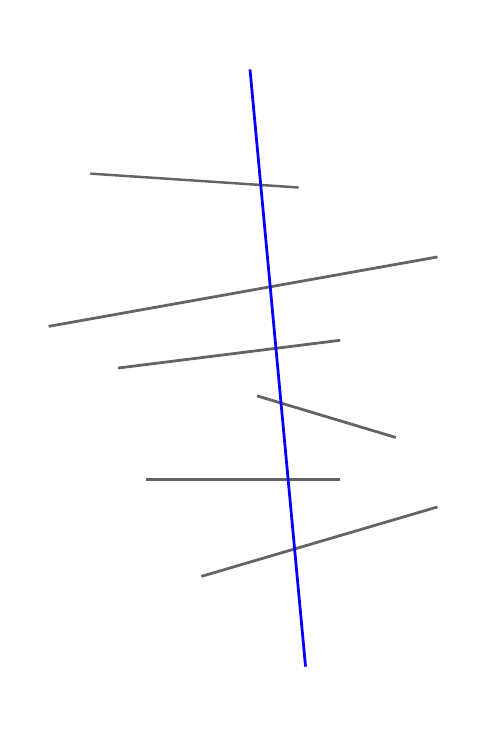}}
\subfloat[][]{\includegraphics[trim=0 0.8cm 0 1cm]{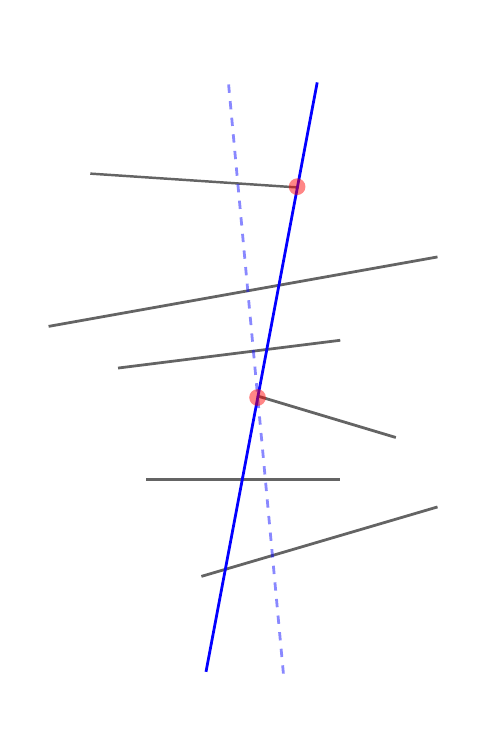}}
\caption{A line segment that crosses 6 layer edges is shown in (a). The
transformed line segment is shown in (b) with the identifying support vertices.}
\label{fig:support_vertices}
\end{figure}

The line $\gamma''$ is uniquely determined by the line through $v_i$ and $v_j$.
However, where $\gamma''$ intersects a layer vertex $u$, there is ambiguity
about which layer edge in $\Lambda$ incident $u$ had intersected with $\gamma$.
This ambiguity can be resolved by representing $\gamma''$ as the ordered pair
$(v_i, v_j)$. Indeed, from knowing that $v_i$ was used for rotation, we can
rotate $\gamma''$ back clockwise by an sufficiently small angle. The resulting
line can only intersect the one vertex $v_i$ in $\Psi$. By applying another
sufficiently small translation to the right, the edges $\Phi$ that $\gamma$
originally intersected can be unambiguously determined.
\end{proof}

\section{Simple Polygon Separators}
\label{sec:separators}

It is well known (\cite{pach98}, \cite{badent07}, \cite{gordon12}) that all
planar graphs admit polyline drawings using a fixed vertex mapping such that the
total number of bends is $O(n^2)$. We show that if a graph admits a polyline
drawing with fewer bends, then its vertices are correspondingly more separable.
This separability is formalized in Lemma~\ref{lem:poly_separator}.

To find the appropriate way of separating our graph $G$, we leverage the
following weighted simple cycle separator theorem.

\begin{lemma}[\cite{djidjev97}]
\label{lem:cycle_separator}
Let $G$ be a maximal planar graph with non-negative vertex weights adding to at
most 1. Then, the vertex set of $G$ can be partitioned into 3 sets $V_1, V_2, C$
such that neither $V_1$ nor $V_2$ has total vertex weight exceeding $2/3$, no edge
from $G$ joins a vertex in $V_1$ to a vertex in $V_2$, and the subgraph on $C$
is a simple cycle with at most $2\sqrt{n} + 7$ vertices.
\end{lemma}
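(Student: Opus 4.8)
Since this lemma is attributed to \cite{djidjev97}, I would reconstruct the standard argument in the style of Miller's simple-cycle separator theorem, adapted to carry vertex weights. The plan is to exploit that a maximal planar graph $G$ is a triangulation: every face is a triangle and the graph is $3$-connected for $n \ge 4$. This rigidity lets me control both the \emph{length} of a candidate separating cycle and the \emph{balance} of the partition it induces, which are the two quantities named in the statement.

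First I would root a breadth-first search tree $T$ at an arbitrary vertex and record the level $L_i$ of vertices at BFS-distance $i$. Long fundamental cycles are the enemy, so following the Lipton--Tarjan strategy I would locate the median-weight level $\ell_0$ and then two nearby sparse levels $\ell_1 \le \ell_0 \le \ell_2$, each of size $O(\sqrt{n})$, chosen so that only $O(\sqrt{n})$ levels lie strictly between them. Contracting everything above $\ell_1$ to a single root vertex and deleting everything below $\ell_2$ yields a triangulation of BFS depth $O(\sqrt{n})$; any fundamental cycle in it therefore climbs to a common ancestor and back in $O(\sqrt{n})$ edges, which is what ultimately produces the $2\sqrt{n} + 7$ bound once the contracted and deleted parts are reinstated.

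Next I would invoke the tree/cotree duality. The edges of $G$ not in $T$ are exactly the duals of a spanning tree $T^*$ of the planar dual $G^*$, and because every face is a triangle each node of $T^*$ has degree at most $3$. Each non-tree edge $e$ determines a fundamental cycle $C_e$, namely $e$ together with the $T$-path between its endpoints, and $C_e$ partitions the faces of $G$ into those inside and those outside; in the dual tree this is precisely the split obtained by deleting the edge $e^*$. I would then distribute the vertex weights onto incident faces, converting the problem of balancing $V_1$ against $V_2$ into the problem of finding a single dual edge whose removal splits the total face weight so that neither side exceeds $2/3$. A centroid search on the weighted, bounded-degree tree $T^*$ supplies such an edge, and the corresponding fundamental cycle becomes the separating cycle $C$, with $V_1$ and $V_2$ the vertices strictly interior and strictly exterior to it.

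The main obstacle, and the place where the argument must be carried out carefully rather than quoted, is reconciling the two demands simultaneously: the cycle must be short (forcing the level-reduction machinery) and weight-balanced (forcing the dual-centroid machinery), while the weight absorbed by the cycle vertices themselves must be accounted for so that the inequality falls on $V_1, V_2$ rather than on the faces. A subtlety is that a single face carrying large weight can keep every individual fundamental cycle from cutting the face weight below $2/3$; here the bounded degree of $T^*$, a direct consequence of triangulation, is essential, and I would either bound the per-face weight by redistribution or fall back on the standard fact that in a degree-$3$ weighted tree some edge always cuts the weight into two pieces each at most $2/3$. Tracking the exact additive overhead through the contraction step is what yields the explicit $2\sqrt{n} + 7$ in place of a bare $O(\sqrt{n})$.
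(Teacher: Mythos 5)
The paper does not prove this lemma at all: it is imported verbatim from Djidjev and Venkatesan's work (the citation in the lemma header is the entire justification), and the rest of the paper only uses it as a black box inside Lemma~\ref{lem:poly_separator}. So there is no in-paper proof to compare against; what you have written is a reconstruction of the external result, and it has to be judged on its own terms.

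As such a reconstruction, your outline names the right ingredients (BFS levels, tree/cotree duality, centroid search on the bounded-degree dual tree), but the way you propose to combine them has a genuine gap. The move ``contract everything above $\ell_1$, delete everything below $\ell_2$, then take a balanced fundamental cycle in the truncated triangulation'' is the Lipton--Tarjan construction, and it does not produce a \emph{simple cycle} separator of the original graph. If the fundamental cycle passes through the contracted root, it is not a cycle of $G$ at all; and even when it avoids the root, the vertices you deleted below $\ell_2$ (and those above $\ell_1$) are not separated by that cycle in $G$ --- they can have neighbors on both sides of it --- which is exactly why Lipton--Tarjan must output the union of the two levels \emph{plus} the fundamental cycle, a set that is not a cycle. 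The ``reinstatement'' you defer to at the end of your second paragraph is precisely the step that fails. The actual Miller/Djidjev--Venkatesan argument avoids truncation: it first finds a weight-balanced fundamental cycle via the dual-tree centroid (your third paragraph, which is fine, including the degree-$3$ observation that handles a heavy face), and then, if that cycle is too long, iteratively \emph{shortens it in place} by locating two of its vertices joined by a short path through the interior, splitting the cycle into two cycles, and keeping the one that remains balanced; the BFS levels enter only to certify that such short chords exist and that the process terminates with a cycle of length $O(\sqrt{n})$. Without that shortening mechanism, your sketch establishes either a short non-simple separator or a balanced but possibly long simple cycle, not both properties simultaneously as the lemma requires.
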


Given a polyline drawing of a graph $G$, we can separate the vertices into two
parts by superimposing a simple polygon $\Omega$. See Figure~\ref{fig:separator}
for an example. The vertices that lie on or inside $\Omega$ define one part, and
the vertices that lie outside $\Omega$ define the other. Let $V_1$ be the
vertices on or inside $\Omega$, and let $V_2$ be the vertices outside $\Omega$.
If $|V_1|$ and $|V_2|$ are both at most $2/3n$ and at most $\alpha$ edges in $G$
join a vertex in $V_1$ to a vertex in $V_2$, then $\Omega$ defines an
\emph{$\alpha$-simple polygon separator}.

\begin{figure}
\centering
\includegraphics[trim=0 0.1cm 0 0.1cm]{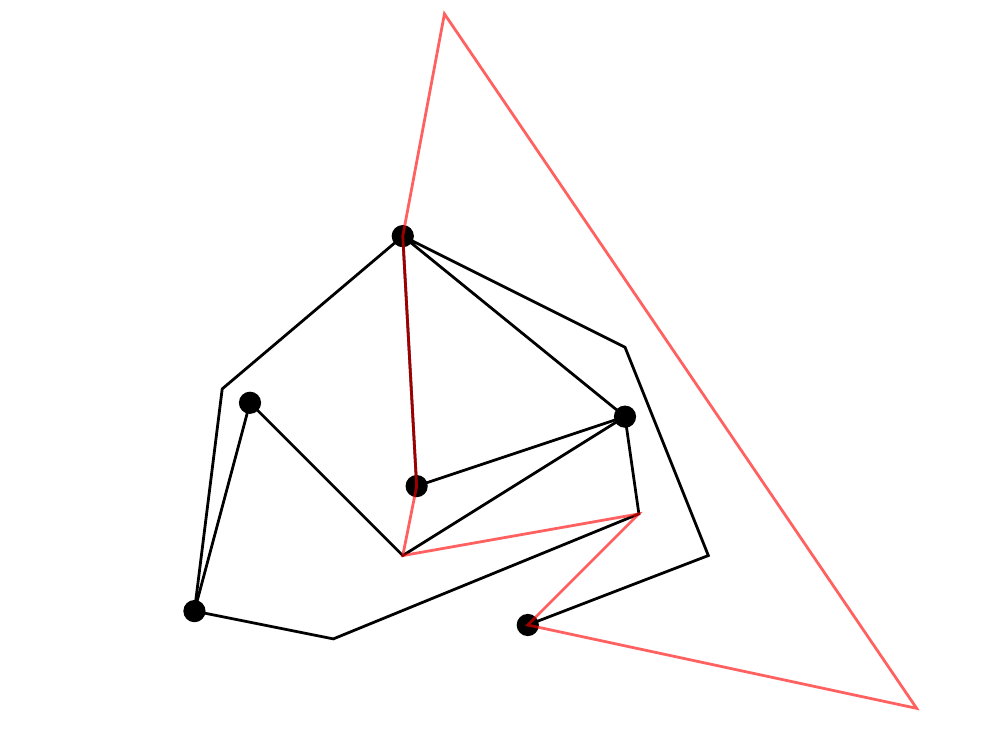}
\caption{A simple polygon separator superimposed on a graph. Two vertices lie
outside and four lie on or inside the separator. Three edges join vertices on
opposing sides of the separator.}
\label{fig:separator}
\end{figure}

\begin{lemma}
\label{lem:poly_separator}
Let $G$ be a planar graph on $n$ vertices with bounded vertex degree $d$.
Suppose that $G$ admits a polyline drawing with a total of $\beta$ bends. Then,
there is an $\alpha$-simple polygon separator having at most $r$ edges,
where $\alpha \leq dr$ and $r \leq 2\sqrt{n + \beta} + 10$.
\end{lemma}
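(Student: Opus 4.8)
The plan is to reduce the geometric separation problem for the polyline drawing to a purely combinatorial cycle separator, which can then be read off from Lemma~\ref{lem:cycle_separator}. First I would build an auxiliary planar graph $H$ by placing a degree-$2$ vertex at each of the $\beta$ bends of the given drawing, thereby subdividing each polyline into straight-line segments. Because the original drawing is planar, $H$ inherits a \emph{straight-line} planar drawing on $n + \beta$ vertices, namely the $n$ vertices of $G$ together with the $\beta$ bend vertices. I would then augment $H$ to a maximal planar graph $\hat H$ by enclosing the drawing in a bounding triangle and triangulating every face with straight-line chords; this adds only $O(1)$ extra vertices and, crucially, introduces no crossings, so the combinatorial embedding of $\hat H$ is exactly the one exhibited by the drawing. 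Finally, I would weight each of the $n$ vertices of $G$ by $1/n$ and give weight $0$ to every bend vertex and every augmentation vertex, so that the weights are non-negative and sum to $1$.

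Next I would apply Lemma~\ref{lem:cycle_separator} to $\hat H$ with these weights, obtaining a partition $V_1', V_2', C$ in which $w(V_1'), w(V_2') \leq 2/3$ and $C$ is a simple cycle with at most $2\sqrt{n + \beta} + 10$ vertices, the $+10$ absorbing both the $+7$ of the lemma and the constant-size augmentation. Drawing $C$ using the straight-line segments of $\hat H$ produces a closed polygonal curve whose edges pairwise do not cross, since the drawing is planar; hence it is a \emph{simple} polygon $\Omega$ with exactly $|C|$ edges, giving $r \leq 2\sqrt{n + \beta} + 10$. Because a maximal planar graph is $3$-connected and therefore has an essentially unique embedding, the combinatorial interior and exterior of the cycle $C$ coincide with the interior and exterior of $\Omega$ in the plane; thus $V_1'$ is precisely the set of vertices drawn inside $\Omega$ and $V_2'$ the set drawn outside.

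It then remains to verify the two defining properties of an $\alpha$-simple polygon separator. For the edge cut, note that any edge of $G$ joining a vertex on or inside $\Omega$ to a vertex outside $\Omega$ is drawn as a polyline crossing $\partial\Omega$; combinatorially this is a path in $\hat H$ from one side of $C$ to the other, so it must pass through a vertex of $C$. Charging each such edge to a vertex of $C$ on its path, an original vertex of $C$ receives at most $d$ charges (its degree in $G$), while a bend vertex of $C$ lies on a single edge of $G$ and so receives at most one. Summing these capacities over $C$ yields $\alpha \leq d\,|C| \leq dr$.

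The delicate point, and the step I expect to be the main obstacle, is the balance condition. The weighting immediately bounds the number of vertices strictly outside $\Omega$ and the number strictly inside by $2n/3$ each, but the $O(\sqrt{n + \beta})$ original vertices lying \emph{on} $\Omega$ are, by definition, all assigned to $V_1$, which can push $|V_1|$ above the $2n/3$ threshold by as much as $|C|$. A short calculation shows the required number of boundary vertices to reassign to the outer side is at most the number of original vertices on $C$, so in principle one can locally perturb $\Omega$ to push exactly this many vertices just outside. The subtle part is arguing that such a selective perturbation can be carried out while keeping $\Omega$ simple and without inflating its edge count beyond the stated bound on $r$; establishing this careful reconciliation between the combinatorial balance of Lemma~\ref{lem:cycle_separator} and the geometric inside/outside partition is where the bulk of the technical effort will lie.
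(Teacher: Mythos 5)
Your construction is the same as the paper's: subdivide at the $\beta$ bends, add a bounding triangle, triangulate to a maximal planar graph with a straight-line drawing, weight original vertices $1/n$ and auxiliary vertices $0$, apply Lemma~\ref{lem:cycle_separator}, read the cycle off as a simple polygon, and charge each cut edge of $G$ to a cycle vertex it passes through to get $\alpha \leq dr$. The only divergence is the balance issue you flag at the end, and it is less of an obstacle than you fear. The paper does not address it at all: it observes only that the original vertices strictly inside and strictly outside the cycle each number at most $2n/3$, and silently folds the at most $|C| \leq r$ original vertices lying \emph{on} $C$ into $V_1$. This surplus is harmless in practice, since $r = O(\sqrt{n+\beta})$ and the lemma is only invoked (in Lemma~\ref{lem:encoding}) in a regime where $r = o(n)$, so the balance constant degrades only to $2/3 + o(1)$; alternatively, the local $\epsilon$-perturbation of $\Omega$ around each vertex point that Lemma~\ref{lem:encoding} performs anyway lets you decide, independently for each of these at most $r$ vertices, which side of $\Omega$ it falls on, at a cost of $O(1)$ extra polygon edges per vertex --- well within the $O(r)$ edge budget already being spent there. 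So no ``bulk of technical effort'' remains; modulo that one sentence of bookkeeping, your proof is the paper's proof.
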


\begin{proof}
Augment $G$ by adding $\beta + 3$ auxiliary vertices, one at each bend in its
polyline drawing and an additional 3 vertices to form a triangle bounding the
polyline drawing. We can further add edges to make a maximal planar graph $G'$
by triangulating this augmented polyline drawing, resulting in a drawing without
any bends (a \emph{straight-line embedding}). See
Figure~\ref{fig:triangulation} for an example.

Observe that $G'$ has $n + \beta + 3$ total vertices. If we weight each
auxiliary vertex $0$ and each original vertex $1/n$, it follows from
Theorem~\ref{lem:cycle_separator} that $G'$ contains a simple cycle separator
$C$ with at most $2\sqrt{n + \beta + 3} + 7$ edges. Moreover, the number of
original vertices lying inside this cycle and the number lying outside are both
at most $2/3n$.

\begin{figure}
\centering
\includegraphics[trim=0 0cm 0 0cm]{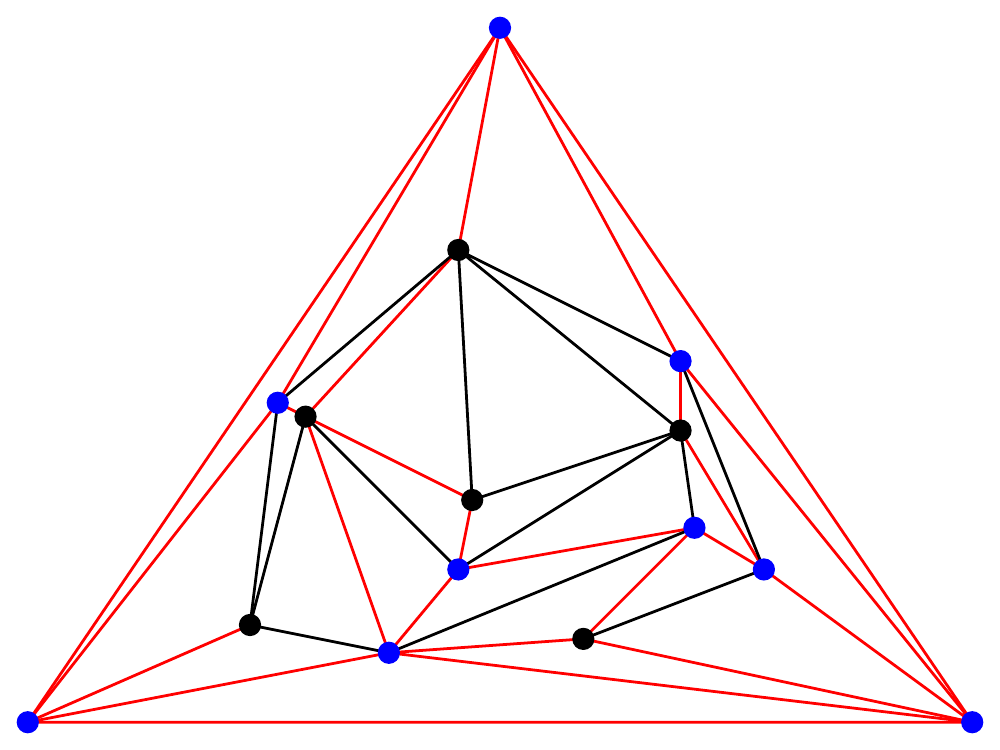}
\caption{The triangulation of a polyline drawing. Auxiliary vertices introduced
at bends and on the bounding triangle are in blue. The auxiliary edges are in
red.}
\label{fig:triangulation}
\end{figure}

Since our drawing of $G'$ had no bends, this cycle corresponds precisely to a
simple closed polygon $\Omega$ with $r$ edges, where $r$ is at most 
\[2\sqrt{n + \beta + 3}+ 7  \leq 2\sqrt{n + \beta} + 10. \]
The polyline of an edge in $G$ from a vertex on or inside $\Omega$ to a vertex
outside $\Omega$ had to intersect with one of the vertices in $C$. Since at most
$d$ edges in $G$ can intersect with any one vertex in $C$, it follows that
$\Omega$ is an $\alpha$-simple polygon separator where $\alpha \leq dr$.
\end{proof}

\section{Encoding Polyline Drawings}
\label{sec:encoding}

In this section, we describe a general method for encoding labeled planar
graphs. An encoding is simply a sequence of bits (our choice for unit of
information) from which the edges of the original graph can be unambiguously
determined. Any information that is said to be \emph{fixed} does not require
encoding as it is assumed to be available at decoding time. The bounds given for
our encoding algorithms are in terms of parameters that can be arbitrarily large
(e.g. the number of vertices or edges in the graph being encoded).

Our encoding bounds provide the means for proving the lower bound of
Theorem~\ref{thm:lower_bound}. However, a direct consequence of
Theorem~\ref{thm:encoding} resolves a conjecture from \cite{francke14} in the
positive. Namely, we show that there are at most $2^{n\lg(1 + \beta/n) + O(n)}$
$\beta$-bend polyline drawings on $n$ vertices whose mapping is fixed. 

The core component of this encoding technique is encapsulated by the following
lemma. The high-level idea of this lemma is as follows. First, we use
Lemma~\ref{lem:poly_separator} to partition a graph's vertices into 2 roughly
equal size parts $V_1,V_2$. This partition defines 3 subgraphs (edges between
$V_1$ vertices, edges between $V_2$ vertices, and edges between a $V_1$ and a
$V_2$ vertex). These 3 subgraphs are then encoded recursively. To complete the
encoding, we just need to show how to encode the partition so that the
original graph can be recovered from the 3 subgraphs. It turns out that the
partition given by Lemma~\ref{lem:poly_separator} can be encoded using only
$O(\sqrt{\beta + n}\lg(n/\sqrt{\beta + n}))$ bits. This additional information
is sufficiently small to give the desired bound.

\begin{lemma}
\label{lem:encoding}
Let $G$ be a planar graph on $n$ vertices where each vertex has degree at most
$2$. Suppose that $G$ admits a polyline drawing with a total of $\beta$ bends
for which the vertex mapping is fixed (i.e. the point at which each vertex is
drawn is fixed). Then, $G$ can be encoded using $n\lg(1 + \beta/n) + O(n)$ bits.
\end{lemma}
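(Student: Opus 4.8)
The plan is to encode $G$ recursively by divide-and-conquer using the simple polygon separator of Lemma~\ref{lem:poly_separator}. Since every vertex has degree at most $2$, the graph is a disjoint union of paths and cycles with at most $n$ edges, and applying Lemma~\ref{lem:poly_separator} with $d = 2$ yields a separator $\Omega$ having $r \le 2\sqrt{n+\beta} + 10$ edges that crosses at most $\alpha \le 2r = O(\sqrt{n+\beta})$ edges of $G$. Letting $B(n,\beta)$ denote the number of bits our scheme uses, I would set up a recurrence of the form $B(n,\beta) \le B(n_1,\beta_1) + B(n_2,\beta_2) + C(n,\beta)$, where $V_1,V_2$ is the partition induced by $\Omega$, $n_i = |V_i| \le \tfrac{2}{3}n$ with $n_1 + n_2 = n$, the bend budgets satisfy $\beta_1 + \beta_2 \le \beta$ because each subdrawing is a restriction of the original drawing, and $C(n,\beta)$ is the cost of recording the separator together with the crossing edges.

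First I would fix, as information available to the decoder, the convex layer diagram $\Lambda$ of the $n$ fixed vertex locations; this requires no bits since the vertex mapping is fixed. The recursion then restricts the drawing to $V_1$ and to $V_2$ and recurses on each induced subgraph, which is again planar of degree at most $2$ with a fixed vertex mapping. The $\alpha = O(\sqrt{n+\beta})$ crossing edges join the two parts and, having only $O(\sqrt{n+\beta})$ non-isolated endpoints, are cheap to record directly; their bends are charged against the slack $\beta - \beta_1 - \beta_2$. Once the decoder knows the partition and the crossing edges, it can reassemble $G$ from the two recursively decoded subgraphs.

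The crux is bounding $C(n,\beta)$ by $O(\sqrt{n+\beta}\,\lg(n/\sqrt{n+\beta}))$, and this is where the convex layer machinery of Section~\ref{sec:cld} enters. Each of the $r$ edges of $\Omega$ is a line segment; whenever such a segment crosses at least two layer boundaries of $\Lambda$, Lemma~\ref{lem:support_vertices} lets me identify the layer edges it crosses by an ordered pair of support vertices drawn from the fixed point set. Thus the entire separator, and hence the partition it induces, is determined by a collection of $O(r)$ support vertices. Rather than spend $\lg n$ bits per support vertex, I would encode this collection as a subset of the $n$ fixed points, at a cost of $\lg\binom{n}{O(r)} = O\!\left(r\lg(n/r)\right) = O\!\left(\sqrt{n+\beta}\,\lg(n/\sqrt{n+\beta})\right)$ bits, together with $O(r)$ further bits recording the pairing and orientation needed to invoke Lemma~\ref{lem:support_vertices}. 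The crossing edges are recorded within the same budget since there are only $O(\sqrt{n+\beta})$ of them.

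The main obstacle, and the final step, is to show that this recurrence solves to $n\lg(1 + \beta/n) + O(n)$. I would argue by induction using the potential $g(n,\beta) = n\lg(1 + \beta/n)$, whose key properties are that it is monotone in $\beta$ and, being concave and homogeneous of degree $1$, superadditive: $g(n_1,\beta_1) + g(n_2,\beta_2) \le g(n,\beta)$ whenever $n_1 + n_2 = n$ and $\beta_1 + \beta_2 \le \beta$. Hence the recursive main terms combine without loss, and it remains to absorb the separator overhead into the $O(n)$ term. Summing $C(n_v,\beta_v)$ over the recursion tree, the balance condition $n_i \le \tfrac{2}{3}n$ forces geometrically decreasing sizes, and the logarithmic factor $\lg(n_v/\sqrt{n_v+\beta_v})$ shrinks at exactly the depths where the node count grows, so the geometric series converges and the total overhead is $O(n)$. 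The delicate points I expect to fight with are verifying this telescoping uniformly in $\beta$ (handling separately the regime of very large $\beta$, where edges may simply be encoded directly), and confirming that the support-vertex subset, once decoded via Lemma~\ref{lem:support_vertices}, pins down the partition unambiguously.
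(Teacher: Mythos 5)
Your architecture matches the paper's: a separator-based divide and conquer via Lemma~\ref{lem:poly_separator}, with the partition encoded through the convex layer diagram and the support vertices of Lemma~\ref{lem:support_vertices}, and the recursive main terms combined by superadditivity of $n\lg(\sigma/n)$. But two of the steps you defer are exactly where the argument can fail, and the mechanisms you sketch for them do not work as stated. First, you budget only ``$O(r)$ further bits'' for the pairing and orientation attached to the $O(r)$ support vertices (and propose to record the crossing edges ``directly''). Knowing the \emph{set} of support vertices does not determine $\Omega$ up to homotopy in $R^2 - \Pi$: you must also recover which support vertices pair up on which $\Omega$-edge and the cyclic order in which $\Omega$ visits its crossings, and an arbitrary pairing/ordering of $O(r)$ items costs $\Theta(r\lg r)$ bits. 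Since $\lg r$ is not $O(\lg(n/r))$ in general (for $\sigma$ near $n^2/2^{O(1)}$ one has $\lg(n/r) = O(1)$ while $\lg r = \Theta(\lg n)$), this exceeds the $O(r\lg(n/r))$ budget and at the root alone can already cost $\omega(n)$ bits. The paper's workaround is the step you are missing: it first encodes how many auxiliary crossing points land on each layer edge (costing $\lg\binom{n+16r}{16r} \le 17r\lg(n/r)$ bits), places those points at conventional positions, and then encodes the cycle $C$ joining them in the order they occur along $\Omega$ by applying the induction hypothesis \emph{to $C$ itself} (a degree-$2$ planar graph on fewer than $n$ conventionally fixed points), which captures all the ordering information in $O(r)$ bits. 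The between-part edges are likewise handled by recursing on the subgraph $G_3$ they form rather than by direct tabulation.

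Second, your claim that the overheads summed over the recursion tree form a convergent geometric series is not correct as a mechanism: if $\sigma$ splits evenly down the tree, $\sum_v \sqrt{\sigma_v}$ at depth $k$ grows like $2^{k/2}\sqrt{\sigma}$, so the per-level totals are not geometrically decreasing and the shrinking logarithmic factor does not rescue a naive level-by-level summation. The correct mechanism---and the technical heart of the paper's proof---is a strengthened induction hypothesis of the form $T(n,m,\sigma) \le m\lg(\sigma/n) + 850n - 423\sqrt{\sigma}\lg(n/\sqrt{\sigma})$: because $\sqrt{1/3}+\sqrt{2/3} > 1$, splitting $\sigma_1+\sigma_2 \le \sigma$ with $n_i \in [n/3,2n/3]$ makes the negative term more negative by a constant fraction of itself, and this per-node slack is what pays for that node's $O(\sqrt{\sigma}\lg(n/\sqrt{\sigma}))$ separator overhead. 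You correctly anticipate a separate base case for very large $\beta$ (the paper uses a trivial encoding once $\sigma \ge n^2/2^{426}$), but without the strengthened hypothesis, or an equivalent potential argument, the telescoping you assert cannot be verified.
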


\begin{proof}
Define $\sigma = \beta + n$ (i.e. the total count of all vertices and bends),
and let $m$ be the number of edges in $G$ (note that $m \leq n$). We will prove
the stronger claim that $G$ can be encoded with $T(n,m,\sigma)$ bits, where
\[T(n,m,\sigma) \leq m\lg(\sigma/n) + 850n - 423\sqrtsiglg,\] 
by induction on $n$.

Since each vertex in $G$ has a degree at most $2$, the edges in $G$ can be
directed such that the outgoing degree of each vertex is at most $1$. Thus, if
we use $n$ bits to encode the vertices in $G$ with an outgoing edge, we can
trivially encode all edges in $G$ by specifying at most one vertex for each
edge. It follows that this encoding uses a total of $m\lg{n} + n$.  

Since $\sqrtsiglg \leq n$ always, we have that
\[427n \leq 850n - 423\sqrtsiglg.\]
Hence, this trivial encoding suffices for our base case as $m\lg{n} + n \leq
427n$ whenever $n \leq 2^{426}$. Suppose further that $\sigma \geq n^2/2^{426}$.
In this case, we have the inequality \[ m\lg{n} + n \leq m\lg(\sigma/n) + 427n\]
and thus the trivial encoding still satisfies the desired bound.

Thus, we can proceed by way of induction assuming that the claim holds for
smaller values of $n$ and that $n \geq 2^{426}$ and $\sigma < n^2/2^{426}$. Let
$\Pi$ be the points to which the vertices in $G$ are mapped in the polyline
drawing. We can form a convex layer diagram out of $\Pi$ as follows. Define
layer 1 by the convex hull of $\Pi$. Remove the points in $\Pi$ that lie on its
convex hull, and define layer 2 by the convex hull of the remaining points. We
can repeat this process to define the $l$ layers of a convex layer diagram
$\Lambda$.

The convex layer diagram $\Lambda$ depends only on the points $\Pi$ and is thus
fixed. We can furthermore assume a fixed iteration order over its layer edges,
starting from the outermost layer and iterating inward. To encode $G$, we will
leverage this fixed structure of $\Lambda$.

Since $G$ has bounded degree $2$, it follows by Lemma~\ref{lem:poly_separator}
that its polyline drawing has an $\alpha$-simple polygon separator $\Omega$ with
at most $r$ edges, where $\alpha \leq 2r$ and $r \leq 2\sqrtsig + 10$. To
simplify our encoding, we make modifications to $\Omega$. At each point in $\Pi$
that intersects with $\Omega$, we modify $\Omega$ in an $\epsilon$-neighborhood
around this point so that it no longer intersects with the point. This
modification can trivially be done by adding at most $2r$ edges (2 at each
intersecting point in $\Pi$ of which there are at most $r$). We also split any
edge in $\Omega$ that crosses a layer boundary more than once to ensure that
each edge crosses each layer boundary at most once. Doing so adds at most $r$
additional edges (each edge can cross a layer boundary at most twice by
convexity and thus at most once split per original edge suffices).

Thus, we can assume that $\Omega$ has at most $4r$ edges and partitions the
vertices in $G$ into two parts $V_1,V_2$, those mapping to points inside
$\Omega$ and those mapping to points outside $\Omega$. We can further assume
that $1/3n \leq |V_1| \leq |V_2| \leq 2/3n$ without loss of generality. We will
show how to encode this vertex partition using at most
$O(\sqrtsig\lg(n/\sqrtsig))$ bits.

Each $\Omega$-edge crosses a (possibly empty) set of layer edges in $\Lambda$.
If at most 1 layer edge is crossed by a given edge, define an auxiliary vertex
at this crossing. Otherwise, define an auxiliary vertex at the first and last
crossing. By Lemma~\ref{lem:support_vertices}, the entire set of layer edges
crossed by a given $\Omega$-edge can be recovered if we further encode its
support vertices. We introduce up to 2 additional auxiliary vertices at the
crossings with the edges incident to the support vertices.

The total number of auxiliary vertices is at most $16r$. Since there are at most
$n$ layer edges, it follows that we can encode the number of auxiliary vertices
that were added to each layer edge using $\lg{n + 16r \choose 16r} \leq
17r\lg(n/r)$ bits. We can define a cycle $C$ that joins these auxiliary
vertices in the order in which they intersect with $\Omega$. $C$ has fewer than
$n$ vertices since $16r \leq 32\sqrtsig + 160 < n$. Thus, if we adopt any fixed
convention for positioning the auxiliary vertices along the layer edges, we can
encode $C$ using $850(16r) = 13600r$ bits by the induction hypothesis. We can
further annotate the edges in $C$ with an additional $16r$ bits to encode which
vertices corresponded to support vertices. Thus, we have encoded the first and
last intersections of each edge in $\Omega$ as well as the support vertices
defining the interior intersections of this edge. We further know from $C$ the
structure of $\Omega$ between layer boundaries and have thus encoded how to
define a simple closed curve that is homotopic to $\Omega$ in $R^2 - \Pi$. This
encoding allows us to unambiguously define the vertex partition $V_1,V_2$, and
uses a total of $17r\lg(n/r) + 13616r$ bits. Since $\sigma < n^2/2^{426}$, it
follows that $\lg(n/\sqrt{\sigma}) > 213$. Using both the constraints that
$r \leq 2\sqrt{\sigma} + 10$ and $n \geq 2^{426}$, we can further show that
$\lg(n/r) \geq 426/2 - 2 = 212$. Thus, the number of bits used to encode
$V_1,V_2$ is at most
\[17r\lg(n/r) + 13616r \leq 82r\lg(n/r).\]

Using the vertex partition $V_1,V_2$, we can partition the edges of $G$ into one
of 3 subgraph $G_1$,$G_2$, and $G_3$. $G_1$ is defined by the edges in $G$
between vertices in $V_1$, $G_2$ is defined by the edges in $G$ between
vertices in $V_2$, and $G_3$ is defined by the edges between a vertex in $V_1$
and a vertex in $V_2$.

To complete the proof, we argue by way of induction as each of $G_1$, $G_2$, and
$G_3$ has fewer than $n$ vertices. Let $n_i$ and $m_i$ be the number of vertices
and edges, respectively, in $G_i$ for $i=1,2,3$. Similarly, define $\sigma_i =
\beta_i + n_i$, where $\beta_i$ is the number of bends in $G_i$ for $i=1,2,3$.
We can thus complete the encoding for a total of 
\[T(n,m,\sigma) \leq T(n_1,m_1,\sigma_1) + T(n_2,m_2,\sigma_2)
  + T(n_3,m_3,\sigma_3) + 82r\lg(n/r)\]
bits.  

Since $V_1,V_2$ were defined in terms of the $\alpha$-simple polygon separator
$\Omega$, it follows that $n_1 + n_2 = n$, $1/3 \leq n_1 \leq n_2 \leq 2/3n$, 
$n_3 \leq 4r$, and $m_1 + m_2 + m_3 = m$. Furthermore, we have the constraint
that $\beta_1 + \beta_2 + \beta_3 \leq \beta$. Subject to these constraints, our
encoding size is thus
\begin{align*}
T(n,m,\sigma) &\leq m_1\lg(\sigma_1/n_1) + 850n_1 - 423\sqrtsiglga \\
              {} &+ m_2\lg(\sigma_2/n_2) + 850n_2 - 423\sqrtsiglgb \\
              {} &+ m_3\lg(\sigma_3/n_3) + 850n_3 - 423\sqrtsiglgc \\
              {} &+ 82r\lg(n/r)
\end{align*}
by induction. The remainder of the proof is to bound this expression. By using
elementary calculus, one can show that this bound is largest when $m_3=0$
(Alternatively, one could redefine $G_1$ to include the edges of $G_3$, removing
the need to encode $G_3$ by adding only an $850r$ term). Thus, our encoding size
satisfies the inequality
\begin{align*}
T(n,m,\sigma) &\leq m_1\lg(\sigma_1/n_1) + 850n_1 - 423\sqrtsiglga \\
              {} &+ m_2\lg(\sigma_2/n_2) + 850n_2 - 423\sqrtsiglgb \\
              {} &+ 82r\lg(n/r)
\end{align*}
where $m_1 + m_2 = m$ and $\sigma_1 + \sigma_2 \leq \sigma$. Since each vertex
in $G_1$ and $G_2$ has degree at most 2, it follows that $m_1 \leq n_1$ and
$m_2 \leq n_2$. Thus, by our constraints on $n_1$ and $n_2$, we also have that 
$1/3m \leq m_1,m_2 \leq 2/3m$. Again, we can show using elementary calculus that
this constrained inequality reaches its maximum when
\[\frac{m_1}{m},\frac{n_1}{n},\frac{\sigma_1}{\sigma} = 1/3
  \quad \text{and} \quad
  \frac{m_2}{m},\frac{n_2}{n},\frac{\sigma_2}{\sigma} = 2/3\]
(see \cite{gordon12} for more details on this derivation). Thus, it follows that
\begin{align*}
T(n,m,\sigma) &\leq m\lg(\sigma/n) + 850n - 423\sqrtsiglg \\
              {} &+ 82r\lg(n/r) + 423\sqrtsig\lg{\sqrt{9/2}} \\
              {} &- (\sqrt{1/3} + \sqrt{2/3} - 1)423\sqrtsiglg
\end{align*}
and since $82r\lg(n/r) + 423\sqrtsig\lg{\sqrt{9/2}} \leq 165\sqrtsiglg$, it
follows that the last two lines cancel and
\begin{align*}
T(n,m,\sigma) &\leq m\lg(\sigma/n) + 850n_1 - 423\sqrtsiglg
\end{align*}
completing the proof.
\end{proof}

We can in fact generalize our encoding to apply to any planar graph by a simple
reduction to Lemma~\ref{lem:encoding}. This generalization gives the following
theorem.

\begin{theorem}
\label{thm:encoding}
Let $G$ be a planar graph that admits a polyline drawing with a total of $\beta$
bends under a fixed vertex mapping. Then, $G$ can be encoded using
$n\lg(1 + \beta/n) + O(n)$ bits.
\end{theorem}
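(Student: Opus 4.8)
The plan is to reduce the general case to Lemma~\ref{lem:encoding} by forcing the graph to have maximum degree $2$ without destroying the drawing, the fixed vertex mapping, or the bend budget. The only obstruction to applying Lemma~\ref{lem:encoding} directly is that an original vertex may have large degree $d$, which breaks both the separator estimate $\alpha \leq dr$ of Lemma~\ref{lem:poly_separator} and the edge-count bound $m \leq n$. So the first step is a degree-reduction gadget. For each vertex $v$ of degree $d_v$, I would place $\lceil d_v/2 \rceil$ copies of $v$ at distinct points inside a sufficiently small $\epsilon$-neighborhood of the point to which $v$ is mapped, following a convention that depends only on the fixed point set $\Pi$ and the degree sequence, and distribute the edges incident to $v$ among these copies so that each copy receives at most two of them. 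Rerouting the final $\epsilon$ of each affected polyline to its assigned copy keeps the drawing planar and adds at most a constant number of bends per edge. The resulting graph $G'$ has maximum degree $2$, at most $m + n = O(n)$ vertices, exactly $m$ edges, and a polyline drawing with $\beta + O(n)$ total bends whose vertex mapping is again fixed.

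Because the copy locations follow a convention determined solely by the fixed data, the decoder can recognize which copies belong to which original vertex, so projecting every edge of $G'$ back onto the original vertex set recovers $G$ exactly; no information beyond the encoding of $G'$ is needed to invert the reduction. The second step is therefore to invoke Lemma~\ref{lem:encoding} on $G'$, producing an encoding of size $N\lg(1 + \beta'/N) + O(N)$ with $N = O(n)$ and $\beta' = \beta + O(n)$. The last step is to argue that this quantity collapses to $n\lg(1 + \beta/n) + O(n)$.

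That collapse is where I expect the real difficulty to lie. The crude substitution $N = O(n)$ only yields $O\!\left(n\lg(1 + \beta/n)\right) + O(n)$, i.e. it pins the leading term down to within a constant factor, whereas the stated bound insists on coefficient exactly $1$. The tension is genuine: a planar graph carries up to $3n$ edges, so charging each edge its separator cost of roughly $\lg(\sigma/n)$ bits gives a coefficient near $3$, not $1$. Recovering the sharp constant forces one to account very carefully for what is truly being charged — the auxiliary copy structure must contribute nothing beyond the additive $O(N) = O(n)$ term (it is \emph{fixed}, hence free in the sense of Section~\ref{sec:encoding}), and one must lean on the stronger inductive estimate $T(n,m,\sigma) \leq m\lg(\sigma/n) + O(n) - \Theta(\sqrtsiglg)$ established inside the proof of Lemma~\ref{lem:encoding}, using $m \leq N$ together with the monotonicity of $x \mapsto x\lg(1 + \beta/x)$ to reabsorb the slack rather than its weakened public form. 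Arguing that splitting a high-degree vertex does not multiply that vertex's contribution to the information term is the crux; by comparison the gadget construction and the decoding argument are routine.
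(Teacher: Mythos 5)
There is a genuine gap, and it sits exactly where you suspected it would. Your vertex-splitting gadget produces a max-degree-$2$ graph $G'$ that still contains all $m$ edges of $G$, and the quantity that Lemma~\ref{lem:encoding} actually charges per unit of geometric information is the \emph{edge} count: the inner inductive bound is $T(n,m,\sigma) \leq m\lg(\sigma/n) + O(n) - \Theta(\sqrtsiglg)$, so pushing $G'$ through it yields a leading term of $m\lg(\sigma'/N) \approx 3n\lg(1+\beta/n)$ for a triangulation-dense $G$. No amount of leaning on the negative $\sqrtsiglg$ term or on monotonicity can repair this, because that term is at most $O(n)$ in magnitude (the paper itself notes $\sqrtsiglg \leq n$), whereas the excess is $2n\lg(1+\beta/n)$, which is superlinear precisely in the regime the theorem is about. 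The resulting bound $3n\lg(1+\beta/n)+O(n)$ is not merely cosmetically weaker: fed into the proof of Theorem~\ref{thm:lower_bound} it gives only $\beta = \Omega(n^{4/3})$ rather than $\Omega(n^2)$, so the coefficient $1$ is load-bearing. (There is also a smaller issue: your copy-placement convention depends on the degree sequence of $G$, which is not fixed data and would itself need $O(n)$ bits of encoding; that part is fixable.)

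The paper's route is structured to pay the $\lg(1+\beta/n)$ price for only $n-1$ edges. It takes a spanning tree $T$ of $G$, walks the unique face of $T$, and extracts from that walk a Hamiltonian path $P$ on the original $n$ vertices (via the $\suc{\cdot}$ construction), drawable with $2\beta + O(n)$ bends under the same fixed mapping. Only $P$ --- with $m = n-1 \leq n$ edges --- is encoded through Lemma~\ref{lem:encoding}, giving $n\lg(1+\beta/n)+O(n)$. Everything else is purely combinatorial and costs $O(n)$ bits: $2n$ bits to recover $T$ from $P$ (which positions of the face traversal to backtrack on), and $O(n)$ bits for the non-tree edges of $G$ via a Catalan-number argument. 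The lesson is that the geometric encoding cannot be applied to all of $G$; it must be applied to a linear-size skeleton from which the rest of the planar structure is recoverable combinatorially.
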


\begin{proof}
Let's assume that $G$ is connected, and let $T$ be a spanning tree of $G$.
Suppose that we traverse the edges bounding the unique face of $T$. Since $T$ is
a tree, we will traverse each edge exactly twice (both sides must be on this
unique face). If we start the traversal along an arbitrary edge, and stop before
traversing it a third time, we will visit a sequence of $2n$ vertices
$S=v_1,v_2,\dots,v_{2n}$.

Let $v_i$ be a vertex in $T$ and let $j$ be the first position at which it
occurs in $S$ from the left. We define $\suc{v_i}$ as the leftmost occurring
vertex in $S$ whose first occurrence is at a position beyond $j$. We can then
define a sequence of vertices $S'=v_i,\suc{v_i},\suc{\suc{v_i}},\dots$,
terminating when a vertex has no successor. This sequence of vertices
corresponds to a path $P$ passing through each vertex in $T$ exactly once. See
Figure~\ref{fig:traversal} for an example of this construction.

\begin{figure}
\centering
\includegraphics[trim=0 0.5cm 0 0.5cm]{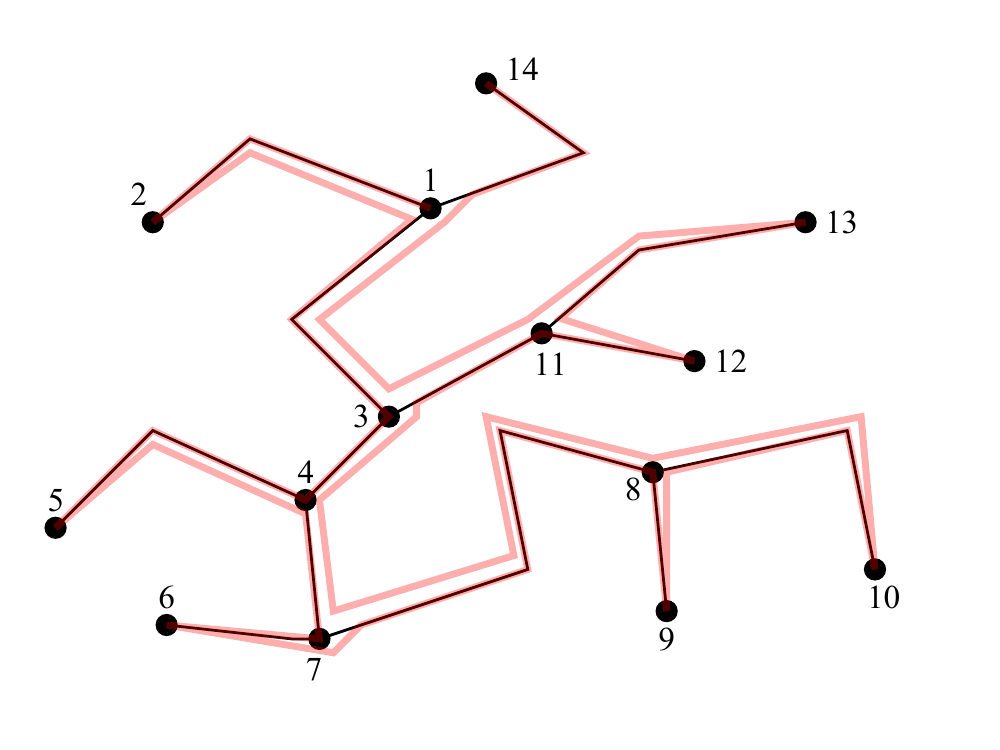}
\caption{The path (in red) defined by the traversal of the unique face of a tree
(in black).}
\label{fig:traversal}
\end{figure}

From the polyline drawing of $G$ we can immediately construct a polyline drawing
of $T$ having at most $\beta$ bends. We can thus construct a polyline drawing of
$P$ using $2\beta + O(n)$ bends since the edges in $P$ correspond to a traversal
along face of $T$. It follows from Lemma~\ref{lem:encoding} that $P$ can be
encoded using
\[ n\lg(1 + (2\beta + O(n))/n) + O(n) = n\lg(1 + \beta/n) + O(n) \]
bits. From $P$ it is easy to recover $T$ with only an additional $2n$ bits.
Indeed, the vertex sequence in $P$ (that is, $S'$) defines a subsequence of the
face of $T$, the rest of which can be encoded by saying which vertices in $P$ to
traverse back on.

Having encoded $T$, we only need to encode the remaining edges in $G$ that are
not in $T$. The number of possible arrangements of these edges corresponds to
the well studied Catalan numbers (i.e. they can be represented with a
parenthesization of length $O(n)$). It follows that $G$ can be encoded with only
$O(n)$ bits more than the encoding of $P$ for a total of
$n\lg(1 + \beta/n) + O(n)$ bits.

Finally, if $G$ in not connected, we can trivially make $G$ connected by adding
edges without introducing more than $\beta$ additional bends, giving the same
bound.
\end{proof}

\section{Lower Bound on the Number of Bends}
\label{sec:lower_bound}

Let $\Pi$ be a set of $n$ points in $R^2$, and let $\pi : V \to \Pi$ be a fixed
vertex mapping. The open question from \cite{pach98} asked whether there exists
a $\pi$ for which all planar graphs admit a polyline drawing with $o(n^2)$ total
bends. The following theorem shows that this cannot be accomplished and that,
for any $\pi$, the set of planar graphs requiring $\Omega(n^2)$ bends becomes
dense as $n \to \infty$.

\begin{theorem}
\label{thm:lower_bound}
Let $\pi : V \to \Pi$ be a fixed vertex mapping, and let $G = (V, E)$ be a
planar graph sampled uniformly at random from the set of all planar graphs on
$n$ vertices. Then, with high probability, all polyline drawings of $G$ using
the vertex mapping $\pi$ have $\Omega(n^2)$ bends.
\end{theorem}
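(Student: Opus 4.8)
The strategy is to compare two quantities: the number of planar graphs on $n$ labeled vertices, and the number of graphs that admit a cheap polyline drawing under the fixed mapping $\pi$. If the latter is a vanishing fraction of the former, then a uniformly random planar graph almost surely requires many bends. First I would invoke the known lower bound on the number of labeled planar graphs: it is $2^{cn}$ for some absolute constant $c > 0$ (in fact $n! \cdot \gamma^n$ for the planar graph growth constant $\gamma$, but the crude exponential bound suffices). This is the denominator against which I measure the ``easy'' graphs.

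**Next I would apply Theorem~\ref{thm:encoding} as a counting device.** The key observation is that an encoding using $B$ bits can represent at most $2^B$ distinct objects. Suppose, for contradiction toward the high-probability statement, that $G$ admits a polyline drawing with $\beta = o(n^2)$ bends under $\pi$. Then by Theorem~\ref{thm:encoding} the graph $G$ can be encoded in $n\lg(1 + \beta/n) + O(n)$ bits. Since $\beta = o(n^2)$ means $\beta/n = o(n)$, we get $\lg(1 + \beta/n) = o(\lg n)$, so the encoding length is $o(n\lg n)$. Hence the number of planar graphs admitting an $o(n^2)$-bend drawing under the fixed mapping $\pi$ is at most $2^{o(n\lg n)} = n^{o(n)}$.

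**Then I would quantify ``with high probability'' precisely.** Fix any threshold $\beta = \delta n^2$ for a small constant $\delta > 0$. The number of graphs encodable in $n\lg(1+\delta n) + O(n)$ bits is at most $2^{n\lg(\delta n) + O(n)} = (\delta n)^n \cdot 2^{O(n)} = n^n \cdot 2^{O(n)}$ roughly, but the point is to make this a $1-o(1)$ fraction. The cleaner route is to note that the number of labeled planar graphs on $n$ vertices is $n! \cdot \gamma^{n + o(n)}$, which is $n^{n}\cdot 2^{-n}\cdot \gamma^n \cdot 2^{o(n)} = 2^{n\lg n - \Theta(n)}$ by Stirling. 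I would then choose $\delta$ small enough that the encoding bound $n\lg(1+\delta n) + O(n)$ is strictly smaller than $\lg(\text{number of planar graphs}) - \omega(n)$; concretely, as $\delta \to 0$ the term $n\lg(1+\delta n)$ sits below $n\lg n$ by an additive $n\lg(1/\delta) - O(n)$ gap, which exceeds any constant multiple of $n$ once $\delta$ is a small enough constant. Thus the fraction of planar graphs admitting a $\delta n^2$-bend drawing is at most $2^{-\Omega(n)}$, which tends to $0$.

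**The main obstacle is getting the constants to line up in the exponent**, since both the encoding length and the planar-graph count are of order $n\lg n$ and differ only in the lower-order $\Theta(n)$ and $\Theta(n\lg(1/\delta))$ terms. The delicate point is that $n\lg(1+\beta/n)$ must be compared against $\lg$ of the number of planar graphs, \emph{and this comparison must be uniform over every fixed mapping $\pi$} --- but since Theorem~\ref{thm:encoding} holds for an arbitrary fixed $\pi$ and the count $2^{o(n\lg n)}$ is independent of $\pi$, this uniformity is automatic. The only real work is verifying that for a suitable constant $\delta$ the bad set has size $o(1)$ times the total, which reduces to the Stirling estimate above; I would present this as choosing $\delta$ after fixing the growth constant $\gamma$, so that the $n\lg(1/\delta)$ savings dominates all the hidden $O(n)$ terms.
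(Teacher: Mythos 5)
Your proposal is correct and is essentially the paper's own argument: both use Theorem~\ref{thm:encoding} as an information-theoretic counting device against the $\Theta(n^{-7/2}\gamma^n n!)$ count of labeled planar graphs, with Stirling supplying the $n\lg n - O(n)$ bits of entropy that the $n\lg(1+\beta/n)+O(n)$ encoding must match (the paper parametrizes the slack by $\Delta$ and sets $\Delta=n$; you parametrize by a small constant $\delta$ in $\beta=\delta n^2$, which is equivalent). One small correction: your parenthetical claim that ``the crude exponential bound $2^{cn}$ suffices'' is false --- the $n!$ factor is essential, since the encoding already spends $O(n)$ bits even at $\beta=0$ and so could account for all of $2^{cn}$ graphs --- but your actual computation correctly uses the full Stirling estimate, so the proof stands as written.
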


\begin{proof}
Suppose that $G$ is a planar graph sampled uniformly at random from the set of
labeled planar graphs on $n$ vertices. The number of such graphs is known to be
$\Theta(n^{-7/2}\gamma^nn!)$, where $\gamma \approx 27.22687$. Thus, for $n$
sufficiently large, it follows that at least $\lg{n!} - \Delta$ bits are
required to encode $G$ with probability at least $1 - 2^{-\Delta}$.

Suppose that $G$ admits a polyline drawing using the vertex mapping $\pi$ having
a total of $\beta$ bends. It follows from Theorem~\ref{thm:encoding} that $G$
can be encoded with
\[n\lg(1 + \beta/n) + O(n) \]
bits. Thus, the inequality
\[n\lg(1 + \beta/n) + O(n) \geq \lg{n!} - \Delta\]
must hold with probability at least $1 - 2^{-\Delta}$. We can use Stirling's
approximation to show that $\lg{n!} \geq n\lg{n} - n$ and thus
\[n\lg(1 + \beta/n) + O(n) + \Delta \geq n\lg{n} \]
or equivalently,
\[n\lg\left(\frac{n + \beta}{n}2^{O(1) + \Delta/n}\right) \geq n\lg{n} \]
holds with probability at least $1 - 2^{-\Delta}$. If we divide through by $n$
and exponentiate both sides, it follows that
\[\frac{n + \beta}{n}2^{O(1) + \Delta/n} \geq n \]
or equivalently,
\[\beta \geq \frac{n^2}{2^{O(1) + \Delta/n}} - n\]
with probability at least $1 - 2^{-\Delta}$. In particular, we can choose
$\Delta$ to be $n$, which shows that $\beta$ is $\Omega(n^2)$ with probability
at least $1 - 2^{-n}$.
\end{proof}

\section{Open Problems}

We have shown that no fixed vertex mapping admits polyline drawings with
$o(n^2)$ total bends for all planar graphs. Moreover, the same result applies
for paths, trees, outerplanar graphs, or any subset of planar graphs with at
least $n!/2^{O(n)}$ elements on $n$ labeled vertices. 

On the other hand, using the techniques from \cite{gordon12} we can always
construct a polyline drawing such that each edge bends at most $3n$ times each
(in fact, $2n$ times each for a sufficiently random planar graph). Since these
results are tight up to constant factors, we ask the following questions:
\begin{enumerate}
\item Can the constant 3 be improved? That is, assuming a fixed vertex mapping,
do all planar graph admit polyline drawings having fewer than $3n$ bends? Do
some fixed vertex mappings give provably better constant factors than others?
\item How tight of a constant can be shown in the lower bound? Using our
approach, this lower bound constant directly maps to the constant of the
$O(n)$ term in the encoding of Theorem~\ref{thm:encoding}. To what extent can
the $O(n)$ term constant be reduced?
\item Can our encoding technique be used to prove other lower bounds in graph
drawing? Are there other applications of this encoding technique outside of
graph drawing?
\end{enumerate}

\bibliographystyle{splncs}
\bibliography{references}

\end{document}